\documentclass[11pt]{article}
\usepackage[margin=1in]{geometry}

\usepackage{amsmath,amssymb,amsthm}
\usepackage{bm}
\usepackage{booktabs}
\usepackage{graphicx}
\usepackage{microtype}
\usepackage{xcolor}
\usepackage[hidelinks]{hyperref}

\newcommand{\dd}{\mathrm{d}}

\newcommand{\CG}{\mathrm{CG}}

\newcommand{\rh}{r_{\mathrm h}}
\newcommand{\rb}{r_{\mathrm b}}
\newcommand{\rc}{r_{\mathrm c}}
\newcommand{\rn}{r_{\mathrm n}}
\newcommand{\rzero}{r_{0}}
\newcommand{\kN}{\kappa_{\mathrm N}}
\newcommand{\TW}{\alpha_{\mathrm W}}

\theoremstyle{plain}
\newtheorem{proposition}{Proposition}
\newcommand{\email}[1]{\texttt{#1}}
\newcommand{\affiliation}[1]{#1}

\title{Mannheim--Kazanas Black Holes:\\Horizons, Temperatures and Thermodynamics}
\title{Mannheim--Kazanas Black Holes:\\Horizons, Temperatures and Thermodynamics}

\author{
Bekir Can L{\"u}tf{\"u}o{\u{g}}lu\\[2mm]
\email{bekir.lutfuoglu@uhk.cz}\\[2mm]
\affiliation{
Department of Physics, Faculty of Science, University of Hradec Kr{\'a}lov{\'e},\\
Rokitansk{\'e}ho 62/26, 500 03 Hradec Kr{\'a}lov{\'e}, Czech Republic
}
}

\date{\today}
\date{\today}

\begin{document}
\maketitle

\begin{abstract}
Static black holes in conformal Weyl gravity differ from their Einstein counterparts in a simple but important way: the Mannheim--Kazanas geometry is encoded in a Bach-flat lapse function whose radial dependence is richer than that of Schwarzschild--de Sitter.  We study the de Sitter branch of this solution in the Schwarzschild gauge, keeping the characteristic linear term together with the quadratic de Sitter term.  In the branch continuously connected to the positive-mass Schwarzschild--de Sitter spacetime, a regular static region between a black-hole horizon and a cosmological horizon exists precisely when
\[
        -\frac{1}{3}<\beta\gamma<\frac{2}{3},
        \qquad
        0<\kappa<\frac{1+3\beta\gamma}{27\beta^{2}}.
\]
In this window the singularity at the origin is hidden, the two positive horizons obey $0<\rb<3\beta<\rc$, and the limiting endpoint is the Nariai geometry.  The thermodynamic interpretation is subtler than the root structure: the two horizons generally have different temperatures, and the static patch has no asymptotic region that would select a unique normalization of time.  We therefore compare the Killing, Bousso--Hawking normalized, Tolman local, and effective two-horizon temperature conventions, and we compute the corresponding Wald entropy for the pure Weyl-squared action.  The result is a local horizon thermodynamic description that keeps the geometric facts separate from the ensemble-dependent choices needed for any global first law.
\end{abstract}

\section{Introduction}

The Mannheim--Kazanas spacetime sits at the intersection of three well-developed lines of research.  The first is four-dimensional conformal Weyl gravity itself, whose local action is built from the square of the Weyl tensor and whose vacuum field equations are fourth order.  The theory has been studied both as a possible ultraviolet-complete or boundary-condition-restricted rewriting of gravity and as a phenomenological alternative to dark matter and dark energy \cite{Mannheim2006,Mannheim2012,Maldacena2011}.  The second line is the classification of static, spherically symmetric vacuum geometries in this theory.  Riegert established the conformal-gravity analogue of Birkhoff's theorem, and Mannheim and Kazanas displayed the form most often used in phenomenology: besides the familiar $1/r$ and $r^{2}$ terms, the lapse contains a term linear in $r$ \cite{Riegert1984,MannheimKazanas1989}.  In the notation used below, the de Sitter branch is written in terms of $\beta$, $\gamma$, and $\kappa$ after the Schwarzschild-gauge normalization has been fixed.

The third line is black-hole thermodynamics in spacetimes with more than one horizon.  Gibbons and Hawking showed that cosmological horizons have thermodynamic meaning, but a de Sitter black hole immediately raises a normalization and equilibrium problem: the static patch has no asymptotic infinity at which to normalize the timelike Killing vector, and the black-hole and cosmological horizons generally have different surface gravities \cite{GibbonsHawking1977}.  Bousso and Hawking proposed a useful normalization at the geodesic static radius, while later work on de Sitter black-hole thermodynamics has emphasized separate horizon first laws, effective temperatures, and the ensemble dependence of any combined thermodynamic description \cite{BoussoHawking1996,KubiznakSimovic2016}.  These issues are already present in Schwarzschild--de Sitter and cannot be avoided by adding the Mannheim--Kazanas linear lapse term.

The question of what should be called a temperature in a gravitational system has generated a wide range of viewpoints, from the standard thermodynamics of causal horizons to more ambitious proposals in which gravitational dynamics, screen variables, or even gravity itself are interpreted thermodynamically or entropically \cite{Jacobson1995,Padmanabhan2010,Verlinde2011,Konoplya2010Entropic}.  This range of perspectives is useful for the present de Sitter problem, where there is no preferred asymptotic clock and two different horizons appear in the same static patch.

A parallel thermodynamic literature exists for conformal-gravity black holes.  The key lesson is that the entropy is not the Einstein area entropy but the Wald Noether-charge entropy of the Weyl-squared action \cite{Wald1993,IyerWald1994}.  In anti-de Sitter conformal gravity, explicit first laws and phase structures have been worked out, including the role of additional spin-2 hair and, in later applications, black-hole chemistry and heat-engine variables \cite{LuPangPope2012,LiuLu2013,XuSunZhao2017}.  
This body of work makes clear that one must specify the action normalization, the boundary conditions, and the variables allowed to vary before assigning a global mass, pressure, or effective temperature.

Beyond thermodynamics, the Mannheim--Kazanas solution has also been used as a background for phenomenological probes.  Light bending in the non-asymptotically flat metric has been revisited in conformal Weyl gravity \cite{SultanaKazanas2010}, while perturbative studies have analyzed near-extremal geodesic instability, scalar, electromagnetic, and gravitational quasinormal modes, multi-stage ringdown and late-time tails, shadows, gray-body factors, and related Weyl black-hole/wormhole extensions \cite{MomenniaHendi2019,MomenniaHendi2020,MomenniaHendiBidgoli2021,Konoplya:2020fwg,Malik2024,Konoplya:2025mvj,Lutfuoglu:2025hjy,Kouniatalis:2025pxs}.  These citations are included only for orientation; none of those perturbative results is used in the derivations below.

The present paper is meant to fill a more elementary but important gap.  It does not propose a new solution and it does not attempt a full phase diagram.  The claim is therefore organizational and interpretive: the aim is to keep the de Sitter horizon structure, temperature normalizations, and Weyl-gravity entropy in one internally consistent convention.  Instead, it gives a self-contained classical analysis of the de Sitter Mannheim--Kazanas branch in which both the linear and quadratic lapse terms are retained: $\gamma\ne0$ and $\kappa\ne0$.  Concretely, the paper combines, in one place, four ingredients that are often treated separately: the precise parameter window for a positive-mass de Sitter black hole; the ordering and factorization of the three horizon roots; the comparison of Killing, Bousso--Hawking normalized, Tolman local, and effective two-horizon temperature conventions; and the corresponding Wald entropy in pure Weyl gravity.  The emphasis is therefore on consistency and interpretation.  The article identifies which statements are invariant horizon statements and which depend on the thermodynamic ensemble chosen for a spacetime with no static infinity.

Throughout we use units $c=\hbar=k_{\mathrm B}=G=1$ unless stated otherwise.  The symbol $\kappa$ denotes the coefficient of the $r^{2}$ term in the metric.  To avoid a clash of notation, the surface gravity of a horizon is denoted by $\varkappa$.

\section{Conformal-gravity conventions}

Consider the pure Weyl-squared action
\begin{equation}
    I_{\CG}=\frac{\TW}{16\pi}\int \dd^{4}x\,\sqrt{-g}\,
    C_{\mu\nu\rho\sigma}C^{\mu\nu\rho\sigma},
    \label{eq:weyl-action}
\end{equation}

In four spacetime dimensions the Weyl tensor is the trace-free part of the Riemann tensor,
\begin{equation}
    C_{\mu\nu\rho\sigma}
    =R_{\mu\nu\rho\sigma}
    -\frac{1}{2}\left(
      g_{\mu\rho}R_{\nu\sigma}-g_{\mu\sigma}R_{\nu\rho}
      -g_{\nu\rho}R_{\mu\sigma}+g_{\nu\sigma}R_{\mu\rho}
    \right)
    +\frac{R}{6}\left(g_{\mu\rho}g_{\nu\sigma}-g_{\mu\sigma}g_{\nu\rho}\right),
    \label{eq:weyl-tensor-definition}
\end{equation}
where $R_{\mu\nu\rho\sigma}$, $R_{\mu\nu}$, and $R$ are respectively the Riemann tensor, Ricci tensor, and Ricci scalar.  Equivalently, $C_{\mu\nu\rho\sigma}$ is the part of the curvature left after removing all Ricci traces; it vanishes for locally conformally flat metrics.

The remaining constant $\TW$ in \eqref{eq:weyl-action} is the dimensionless Weyl coupling in these units.  Some conformal-gravity papers use the opposite overall sign in the action.  All entropies below scale with this signed coupling; changing the sign convention for the action changes the sign convention for $\TW$.

The field equations obtained from \eqref{eq:weyl-action} are the vanishing of the Bach tensor in vacuum.  In the static, spherically symmetric gauge,
\begin{equation}
    \dd s^{2}=-B(r)\,\dd t^{2}+\frac{\dd r^{2}}{B(r)}+r^{2}\dd\Omega_{2}^{2},
    \label{eq:metric-ansatz}
\end{equation}
Here $t$ is the static time coordinate, $r$ is the areal radius, $\dd\Omega_{2}^{2}=\dd\theta^{2}+\sin^{2}\theta\,\dd\phi^{2}$ is the unit-two-sphere metric, and $B(r)$ is the metric lapse function.  This $B(r)$ should not be confused with the Bach tensor; no separate symbol for the Bach tensor is needed below.  A static region is an interval on which $B(r)>0$, since then $\partial_t$ is timelike.
The general Bach-flat lapse in this gauge contains constant, $1/r$, $r$, and $r^{2}$ pieces.  In the Mannheim--Kazanas parametrization used here it may be written as \cite{MannheimKazanas1989}
\begin{equation}
    B(r)=1-3\beta\gamma-\frac{\beta(2-3\beta\gamma)}{r}
    +\gamma r-\kappa r^{2}.
    \label{eq:MK-lapse}
\end{equation}
Here $\beta$ has dimensions of length, $\gamma$ has dimensions of inverse length, and $\kappa$ has dimensions of inverse length squared.  The original Mannheim--Kazanas notation often writes the last parameter as $k$; we write $\kappa$ to make the de Sitter interpretation explicit.  It is useful to introduce
\begin{equation}
    A=1-3\beta\gamma,
    \qquad
    \mu=\beta(2-3\beta\gamma),
    \label{eq:A-mu-def}
\end{equation}
so that
\begin{equation}
    B(r)=A-\frac{\mu}{r}+\gamma r-\kappa r^{2}.
    \label{eq:B-Amu}
\end{equation}
The Schwarzschild--de Sitter limit is recovered by setting $\gamma=0$, in which case $\mu=2\beta$ and $\kappa=\Lambda/3$.
Here $\Lambda$ is the Einstein-gravity cosmological constant in the limiting metric.  The quantities $A$ and $\mu$ are only abbreviations.  This parametrization displays the four radial pieces of the Bach-flat lapse while using the conventional Mannheim--Kazanas normalization in which the de Sitter branch is labeled by $\beta$, $\gamma$, and $\kappa$.

The curvature invariants already show the role of the chosen parametrization.  For the lapse function \eqref{eq:MK-lapse}, one has
\begin{align}
    C_{\mu\nu\rho\sigma}C^{\mu\nu\rho\sigma}
       &= \frac{12(\mu+\beta\gamma r)^{2}}{r^{6}},
       \label{eq:weyl-invariant}\\
    R &= 12\kappa-\frac{6\gamma}{r}+\frac{6\beta\gamma}{r^{2}}.
    \label{eq:ricci-scalar}
\end{align}
The Ricci scalar in \eqref{eq:ricci-scalar} follows from $R=-B''-4B'/r+2(1-B)/r^{2}$ for the metric \eqref{eq:metric-ansatz}.  The Weyl invariant follows from the single independent orthonormal Weyl component, $C_{\hat t\hat r\hat t\hat r}=-(\mu+\beta\gamma r)/r^{3}$, together with spherical symmetry; hats denote an orthonormal frame.
Thus the branch with $\mu>0$ contains a curvature singularity at $r=0$, and a black-hole interpretation requires an event horizon shielding that singularity from the static region.

\section{Horizon structure and parameter range}

Horizons are positive zeros of $B(r)$.  Equivalently,
\begin{equation}
    rB(r)=-\kappa r^{3}+\gamma r^{2}+A r-\mu=0.
    \label{eq:horizon-cubic}
\end{equation}
For an asymptotically de Sitter branch we take
\begin{equation}
    \kappa>0,
    \label{eq:kappa-positive}
\end{equation}
so $B(r)\to -\infty$ as $r\to\infty$.  For the branch connected continuously to positive-mass Schwarzschild--de Sitter we also take
\begin{equation}
    \beta>0,
    \qquad
    \mu=\beta(2-3\beta\gamma)>0.
    \label{eq:positive-mu}
\end{equation}
The second condition means $\beta\gamma<2/3$ and ensures $B(r)\to -\infty$ as $r\to0^{+}$, so that the singularity is not already in the static region.

\begin{proposition}[Black-hole window]
Assume $\beta>0$ and $\kappa>0$.  The Mannheim--Kazanas solution \eqref{eq:MK-lapse} has the standard de Sitter black-hole horizon structure---one event horizon and one cosmological horizon, with the singularity hidden behind the event horizon---in the Schwarzschild--de Sitter-connected branch if
\begin{equation}
    -\frac{1}{3}<\beta\gamma<\frac{2}{3},
    \qquad
    0<\kappa<\kN,
    \qquad
    \kN=\frac{1+3\beta\gamma}{27\beta^{2}}.
    \label{eq:black-hole-window}
\end{equation}
In this range there are precisely two positive simple roots, denoted $\rb$ and $\rc$, and one negative root, denoted $\rn$, with
\begin{equation}
    0<\rb<3\beta<\rc,
    \qquad
    \rn<0.
    \label{eq:horizon-ordering}
\end{equation}
At $\kappa=\kN$ the two positive horizons merge at $r=3\beta$, giving the Nariai limit.
\end{proposition}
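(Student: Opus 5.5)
The plan is to turn the horizon condition into a level-set problem for a single function of $r$, so that $\kappa$ appears only as the level. Dividing \eqref{eq:horizon-cubic} by $r^{3}$ shows that, for $r>0$, $B(r)=0$ if and only if
\begin{equation*}
    f(r)\equiv\frac{\gamma r^{2}+Ar-\mu}{r^{3}}=\kappa .
\end{equation*}
Moreover $rB(r)=r^{3}\bigl(f(r)-\kappa\bigr)$, so the sign of $B$ on $(0,\infty)$ is the sign of $f-\kappa$. The counting of positive horizons, their simplicity, and the sign of $B$ between them therefore all follow from the monotonicity of $f$.

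The first step is to differentiate:
\begin{equation*}
    f'(r)=-\frac{\gamma r^{2}+2Ar-3\mu}{r^{4}} .
\end{equation*}
Substituting $A=1-3\beta\gamma$ and $\mu=\beta(2-3\beta\gamma)$ from \eqref{eq:A-mu-def}, the numerator vanishes at $r=3\beta$. Reading off the second root from the product of roots gives the factorization
\begin{equation*}
    \gamma r^{2}+2Ar-3\mu=(r-3\beta)\bigl(\gamma r+2-3\beta\gamma\bigr).
\end{equation*}
By \eqref{eq:positive-mu} the second factor is positive at $r=0$. If $\gamma\ge0$ it stays positive for all $r>0$. If $\gamma<0$ it vanishes only at $r_{*}=(2-3\beta\gamma)/|\gamma|=3\beta+2/|\gamma|>3\beta$.

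From this factorization the shape of $f$ on $(0,\infty)$ follows. It increases strictly from $f(0^{+})=-\infty$ (because $\mu>0$) up to $3\beta$. It then decreases strictly. For $\gamma\ge0$ it decreases all the way to $0$ as $r\to\infty$; for $\gamma<0$ it decreases to a minimum at $r_{*}$ and then rises back toward $0^{-}$, since $f\sim\gamma/r$. The maximum value is computed directly:
\begin{equation*}
    f(3\beta)=\frac{1+3\beta\gamma}{27\beta^{2}}=\kN .
\end{equation*}
Equivalently, $B(3\beta)=(1+3\beta\gamma)/3-9\kappa\beta^{2}$ and $B'(3\beta)=(2+6\beta\gamma)/(9\beta)-6\kappa\beta$ vanish simultaneously exactly when $\kappa=\kN$. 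This gives the Nariai double root at $r=3\beta$.

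The case analysis then runs as follows. If $\beta\gamma\le-1/3$, then $\kN\le0<\kappa$, so $B<0$ for every $r>0$; there is no horizon and no static region, which explains the lower bound in \eqref{eq:black-hole-window}. For $0<\kappa<\kN$, the increasing branch on $(0,3\beta)$ takes the value $\kappa$ exactly once, at $\rb$. The decreasing branch after $3\beta$ takes the value $\kappa$ exactly once, at $\rc$. In the $\gamma<0$ case one must also check that the tail beyond $r_{*}$ lies entirely below $0<\kappa$, so it produces no third positive root; I expect this to be the step needing the most care.

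Both roots are simple because $f'\neq0$ at them. Since $f>\kappa$ exactly on $(\rb,\rc)$, we get $B>0$ there and $B<0$ on $(0,\rb)$ and on $(\rc,\infty)$. This is the standard structure: the singularity at $r=0$ lies in the non-static region behind $\rb$, and $\rc$ is a cosmological horizon. Finally, the cubic $-\kappa r^{3}+\gamma r^{2}+Ar-\mu$ has product of roots $\rb\rc\rn=-\mu/\kappa<0$. With $\rb,\rc>0$ this forces the third root to satisfy $\rn<0$, completing \eqref{eq:horizon-ordering}. Letting $\kappa\to\kN^{-}$ sends $\rb,\rc\to3\beta$, which is the Nariai endpoint.
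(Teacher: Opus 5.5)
Your proof is correct, but it takes a different route from the paper's.

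\textbf{The paper's route.} The paper works directly with $B$. It uses $B\to-\infty$ at $0^{+}$ and at $\infty$, together with $B(3\beta)=(1+3\beta\gamma)/3-9\kappa\beta^{2}>0$, and applies the intermediate value theorem to get at least one zero on each side of $3\beta$. It then uses the product of the cubic's roots, $-\mu/\kappa<0$. This single fact forces the third root to be negative. That rules out both a third positive root and a repeated positive root, so the two positive roots are simple and separated by $3\beta$. The Nariai point is identified by $B(3\beta)=B'(3\beta)=0$. A separate remark dismisses the other candidate double root $3\beta-2/\gamma$.

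\textbf{Your route.} You isolate $\kappa$ as a level of $f(r)=(\gamma r^{2}+Ar-\mu)/r^{3}$. You then read off the whole shape of $f$ from the single factorization $f'(r)=-(r-3\beta)(\gamma r+2-3\beta\gamma)/r^{4}$.

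\textbf{Comparison.} Your route costs a bit more computation, but it yields more:
\begin{itemize}
\item The count, simplicity and ordering of the roots, and the sign of $B$ on $(0,\rb)$, $(\rb,\rc)$ and $(\rc,\infty)$, all follow from monotonicity alone. Vieta is needed only for $\rn<0$.
\item The paper's second candidate $3\beta-2/\gamma$ appears naturally as your $r_{*}$, a local minimum of $f$ at negative level. This is exactly why the paper finds that a double root there would require $\kappa<0$.
\item You get necessity for free, which the proposition (stated only as ``if'') does not claim. For $\kappa>0$ and $\mu>0$ there is no static region once $\kappa\ge\kN$ or $\beta\gamma\le-1/3$, since then $f\le\kappa$ on all of $(0,\infty)$.
\item The approach $\rb,\rc\to3\beta$ as $\kappa\to\kN^{-}$ is transparent from the strict monotonicity of $f$ on either side of its maximum.
\end{itemize}

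\textbf{The step you flagged.} The step you marked as needing the most care is already settled by what you established. On $(r_{*},\infty)$, $f$ is strictly increasing with limit $0$, so $f<0<\kappa$ there. The decreasing branch $(3\beta,r_{*})$ therefore runs from $\kN$ down to $f(r_{*})<0$ and meets the level $\kappa$ exactly once.
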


\begin{proof}
The coefficient of $1/r$ is $-\mu/r$.  If $\mu>0$, then $B(r)\to-\infty$ as $r\to0^{+}$.  Since $\kappa>0$, also $B(r)\to-\infty$ as $r\to\infty$.  Evaluating the lapse function at $r=3\beta$ gives
\begin{equation}
    B(3\beta)=\frac{1+3\beta\gamma}{3}-9\kappa\beta^{2}.
    \label{eq:B-at-3beta}
\end{equation}
Hence $B(3\beta)>0$ precisely when $\kappa<\kN$ and $1+3\beta\gamma>0$.  The inequalities in \eqref{eq:black-hole-window} combine this condition with $\mu>0$.  Since $B$ is negative at both ends of the positive radial axis but positive at $3\beta$, it has at least two positive zeros, one on each side of $3\beta$.  The cubic $rB(r)$ has at most three real roots.  Its constant term is $-\mu<0$ and its leading coefficient is $-\kappa<0$, so the product of all three roots is negative.  Once two roots are positive, the remaining real root is negative.  The equality $\kappa=\kN$ makes $B(3\beta)=B'(3\beta)=0$, so the two positive horizons coalesce at $r=3\beta$.
For completeness, the simultaneous equations $B(r_{*})=B'(r_{*})=0$ have candidate double roots $r_{*}=3\beta$ and $r_{*}=3\beta-2/\gamma$.  The second candidate is incompatible with the de Sitter black-hole window: for $\gamma>0$ it is positive only when $\beta\gamma>2/3$, while for $\gamma<0$ it would require $\kappa<0$.  Hence the only positive double-root endpoint in the present branch is the Nariai point.
\end{proof}

It is often useful to factor the cubic as
\begin{equation}
    rB(r)=-\kappa(r-\rb)(r-\rc)(r-\rn),
    \label{eq:factorized-cubic}
\end{equation}
where $\rb$ is the black-hole event horizon, $\rc$ is the cosmological horizon, and $\rn$ is negative in the black-hole window.  Comparing coefficients gives
\begin{align}
    \gamma &= \kappa(\rb+\rc+\rn),
    \label{eq:root-gamma}\\
    A &= -\kappa(\rb\rc+\rb\rn+\rc\rn),
    \label{eq:root-A}\\
    \mu &= -\kappa\rb\rc\rn.
    \label{eq:root-mu}
\end{align}
These relations are a compact way to translate between the Lagrangian-looking parameters $(\beta,\gamma,\kappa)$ and the directly geometric data $(\rb,\rc,\rn)$.

Several comments are worth making about the range \eqref{eq:black-hole-window}.  First, the assumption that both the linear and quadratic lapse terms are present means $\gamma\ne0$ and $\kappa\ne0$; the formulae above include the limiting cases only as checks.  Second, if $\beta\gamma\ge2/3$, then $\mu\le0$ and the $r=0$ singularity is not hidden in the Schwarzschild--de Sitter way.  Third, if $\beta\gamma\le-1/3$, the Nariai value $\kN$ is non-positive and the simple de Sitter black-hole interval disappears.  Other coordinate patches or conformal gauges may describe other causal structures, but they are not the positive-mass de Sitter black-hole branch analyzed here.  Thus ``black-hole window'' is used in a geometric sense: it asserts the root structure and shielding of the singularity, not perturbative stability or observational viability.

\subsection[The small-kappa and kappa=0 limits]{The small-$\kappa$ and $\kappa=0$ limits}

It is important not to confuse an outer horizon produced by the linear term with a genuinely de Sitter asymptotic.  In the sign convention of \eqref{eq:MK-lapse}, the invariant de Sitter scale is set by the coefficient of the quadratic term: as $r\to\infty$ one has $R\to12\kappa$ and the lapse behaves as $B(r)\sim-\kappa r^{2}$.  Thus a strictly de Sitter-like asymptotic requires $\kappa>0$, not a particular sign of $\gamma$.  The black-hole window \eqref{eq:black-hole-window} already includes either sign of $\gamma$, subject only to $-1/3<\beta\gamma<2/3$.

If $\kappa=0$ exactly, the spacetime leaves the de Sitter branch considered in the main analysis.  The large-$r$ curvature then tends to zero rather than to a positive constant, since $R=-6\gamma/r+6\beta\gamma/r^{2}$ and $C_{\mu\nu\rho\sigma}C^{\mu\nu\rho\sigma}=12(\mu+\beta\gamma r)^{2}/r^{6}$.  Nevertheless the sign of $\gamma$ still matters for the number of horizons.  The horizon equation reduces to
\begin{equation}
    \gamma r^{2}+(1-3\beta\gamma)r-\beta(2-3\beta\gamma)=0.
    \label{eq:kappa-zero-horizon-quadratic}
\end{equation}
For $-1/3<\beta\gamma<0$, i.e. negative but not too negative $\gamma$, this quadratic has two positive roots.  The larger root is an outer horizon created by the negative linear term, not by de Sitter curvature.  For $0<\beta\gamma<2/3$, the same $\kappa=0$ equation has only one positive root, the black-hole horizon, and the static exterior extends to arbitrarily large $r$ with $B(r)\sim\gamma r$.  The borderline $\gamma=0=\kappa$ is the Schwarzschild limit, again with no cosmological horizon.

For very small but positive $\kappa$, both possibilities are continuously connected to the formulas above, but their interpretation differs.  If $\gamma>0$, the outer horizon is pushed to a large scale, approximately $\rc\sim\gamma/\kappa$, so it is the eventual $-\kappa r^{2}$ falloff that closes the static patch.  If $\gamma=0$, the familiar Schwarzschild--de Sitter scaling $\rc\sim1/\sqrt{\kappa}$ is recovered.  If $\gamma<0$, the two positive horizons approach the two finite roots of \eqref{eq:kappa-zero-horizon-quadratic} as $\kappa\to0^{+}$, while the third root runs to large negative radius.  This is why the present paper treats $\kappa=0$ only as a limiting check: it may have a two-horizon static patch for $\gamma<0$, but it is not asymptotically de Sitter in the curvature sense used for the thermodynamic discussion.

\section{Hawking temperatures}

For a Killing horizon generated by $\chi=\partial_{t}$, the surface gravity associated with the coordinate normalization of $t$ is
\begin{equation}
    \varkappa_{\mathrm h}^{(K)}=\frac{1}{2}|B'(\rh)|,
    \qquad
    T_{\mathrm h}^{(K)}=\frac{|B'(\rh)|}{4\pi}.
    \label{eq:killing-temperature-general}
\end{equation}
The superscript $(K)$ is a reminder that this is the temperature associated with the unnormalized Killing vector $\partial_{t}$.  In de Sitter there is no infinity at which to normalize $\partial_{t}$, so this is a convenient geometric normalization rather than a uniquely preferred physical one.

Using the horizon equation $B(\rh)=0$, the derivative can be written as
\begin{equation}
    B'(\rh)=\frac{A}{\rh}+2\gamma-3\kappa\rh.
    \label{eq:Bprime-horizon}
\end{equation}
Therefore, in the black-hole window,
\begin{align}
    T_{\mathrm b}^{(K)}
       &=\frac{1}{4\pi}\left(\frac{A}{\rb}+2\gamma-3\kappa\rb\right),
       \label{eq:T-b-A}\\
    T_{\mathrm c}^{(K)}
       &=\frac{1}{4\pi}\left(3\kappa\rc-2\gamma-\frac{A}{\rc}\right).
       \label{eq:T-c-A}
\end{align}
The signs have been chosen so that both temperatures are positive: $B'(\rb)>0$ and $B'(\rc)<0$.

The factorized form \eqref{eq:factorized-cubic} gives an even clearer expression:
\begin{align}
    T_{\mathrm b}^{(K)}
       &=\frac{\kappa(\rc-\rb)(\rb-\rn)}{4\pi\rb},
       \label{eq:T-b-roots}\\
    T_{\mathrm c}^{(K)}
       &=\frac{\kappa(\rc-\rb)(\rc-\rn)}{4\pi\rc}.
       \label{eq:T-c-roots}
\end{align}
Since $\rn<0$ and $\rc>\rb$, these obey
\begin{equation}
    T_{\mathrm b}^{(K)}>T_{\mathrm c}^{(K)}
    \label{eq:Tb-greater-Tc}
\end{equation}
Indeed, the ratio of \eqref{eq:T-b-roots} and \eqref{eq:T-c-roots} is $\rc(\rb-\rn)/[\rb(\rc-\rn)]$, and the numerator exceeds the denominator by $(-\rn)(\rc-\rb)>0$; hence the inequality holds for every non-degenerate black hole in this branch.  The two horizons therefore do not form a global thermal equilibrium system.  A Euclidean continuation can make the geometry regular at the black-hole horizon or at the cosmological horizon, but not at both simultaneously, except in the limiting sense in which the two horizons coalesce.

\subsection{Bousso--Hawking normalization}

A common de Sitter prescription is to normalize the Killing vector at the static geodesic radius, namely at the radius $\rzero$ where the acceleration of a static observer vanishes \cite{BoussoHawking1996}.  In the present metric this is the maximum of $B$ in the static patch:
\begin{equation}
    B'(\rzero)=0,
    \qquad
    2\kappa \rzero^{3}-\gamma\rzero^{2}-\mu=0,
    \qquad
    \rb<\rzero<\rc.
    \label{eq:rzero-def}
\end{equation}
Let
\begin{equation}
    B_{0}=B(\rzero)>0.
    \label{eq:B0-def}
\end{equation}
The normalized Killing vector is
\begin{equation}
    \widetilde{\chi}=\frac{\partial_{t}}{\sqrt{B_{0}}},
    \label{eq:normalized-killing}
\end{equation}
and the corresponding Bousso--Hawking temperatures are
\begin{equation}
    T_{\mathrm b}^{(BH)}=\frac{T_{\mathrm b}^{(K)}}{\sqrt{B_{0}}},
    \qquad
    T_{\mathrm c}^{(BH)}=\frac{T_{\mathrm c}^{(K)}}{\sqrt{B_{0}}}.
    \label{eq:bousso-hawking-temp}
\end{equation}
This normalization does not make the two horizons have the same temperature; it rescales both by the same factor.  It is best understood as the temperature measured by the preferred inertial static observer at $r=\rzero$.  Consequently, it is a normalization prescription rather than an additional equilibrium condition.

The Nariai limit illustrates the difference between normalizations.  As $\kappa\to\kN^{-}$, the horizons approach
\begin{equation}
    \rb,\rc\to 3\beta,
    \label{eq:nariai-radius}
\end{equation}
and the Killing temperatures \eqref{eq:T-b-A}--\eqref{eq:T-c-A} vanish.  However, $B_{0}$ also vanishes.  Taking the ratio gives a finite Bousso--Hawking temperature,
\begin{equation}
    T_{\mathrm N}^{(BH)}=\lim_{\kappa\to\kN^{-}}T_{\mathrm b}^{(BH)}
    =\lim_{\kappa\to\kN^{-}}T_{\mathrm c}^{(BH)}
    =\frac{1}{6\pi\beta}.
    \label{eq:nariai-bh-temperature}
\end{equation}
The last limit follows by writing $\rb=3\beta-\varepsilon+O(\varepsilon^{2})$ and $\rc=3\beta+\varepsilon+O(\varepsilon^{2})$ with small positive $\varepsilon$.  At the degenerate point $B''(3\beta)=-2/(9\beta^{2})$, so $T_{\mathrm b}^{(K)}\sim \varepsilon/(18\pi\beta^{2})$ while $\sqrt{B_{0}}\sim\varepsilon/(3\beta)$; their ratio is therefore $1/(6\pi\beta)$.
Thus the statement that the Nariai temperature is zero or non-zero is not a contradiction; it depends on whether one keeps the original Killing time or the rescaled time appropriate to the limiting Nariai geometry.

\subsection{Tolman local temperatures}

For an arbitrary static observer at radius $R$ with $\rb<R<\rc$, the locally measured Tolman temperature associated with a horizon is
\begin{equation}
    T_{\mathrm h}(R)=\frac{T_{\mathrm h}^{(K)}}{\sqrt{B(R)}}.
    \label{eq:tolman-temperature}
\end{equation}
Equation \eqref{eq:bousso-hawking-temp} is the special case $R=\rzero$.  The Tolman temperature diverges as the observer approaches a horizon, as expected from the infinite acceleration needed to remain static there.

\subsection{Temperature dependence on parameters}

The formulae above are most easily visualized in dimensionless variables.  Define
\begin{equation}
    x=\beta\gamma,
    \qquad
    q=\kappa\beta^{2},
    \qquad
    q_{\mathrm N}=\kN\beta^{2}=\frac{1+3x}{27}.
    \label{eq:dimensionless-plot-variables}
\end{equation}
Then the black-hole window is $-1/3<x<2/3$ and $0<q<q_{\mathrm N}$, while the dimensionless temperatures are $\beta T_{\mathrm b}$ and $\beta T_{\mathrm c}$.  Figure~\ref{fig:temperature-curves} shows representative one-parameter cuts through this window.  The left panel uses the Killing normalization; the right panel uses the Bousso--Hawking normalization at the geodesic radius.  In both panels the solid curves denote the black-hole horizon and the dashed curves denote the cosmological horizon.  The Killing temperatures vanish at the Nariai endpoint, whereas the Bousso--Hawking normalized temperatures approach the common finite value \eqref{eq:nariai-bh-temperature}.

\begin{figure}[htbp]
    \centering
    \includegraphics[width=0.92\textwidth]{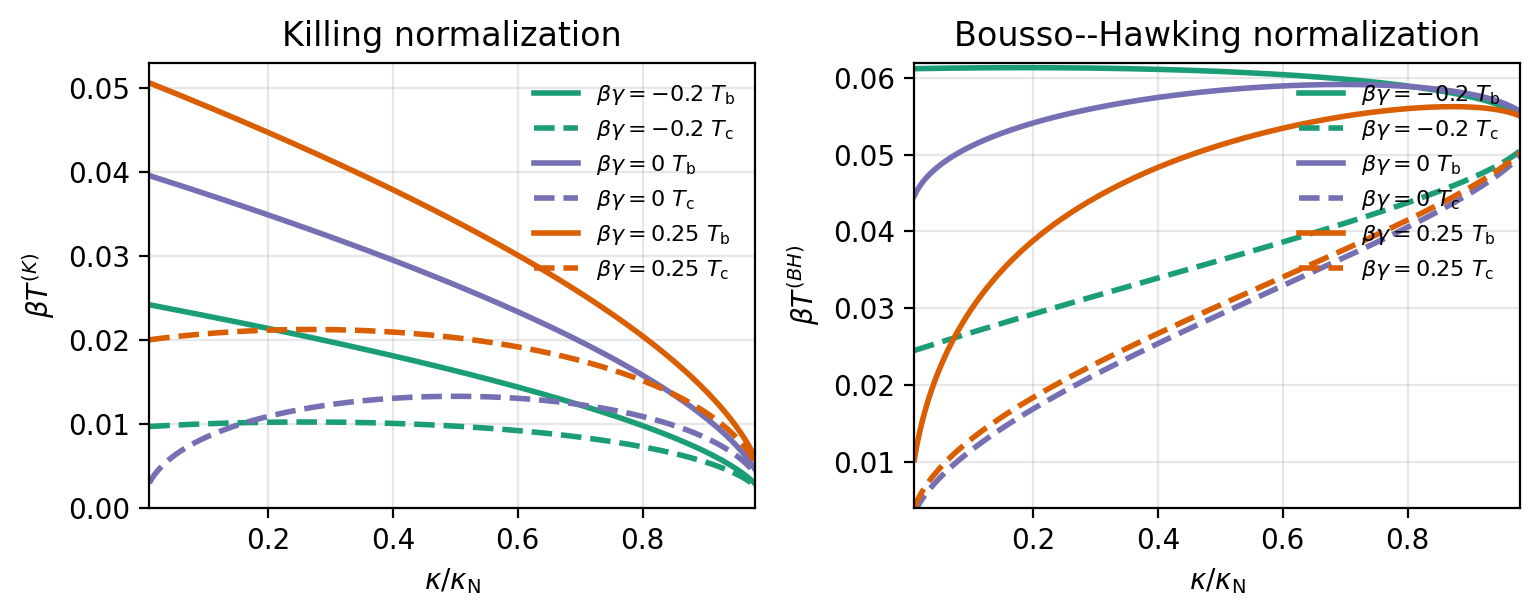}
    \caption{Dimensionless horizon temperatures as functions of $\kappa/\kappa_{\mathrm N}$ for three fixed values of $\beta\gamma$.  The solid curves are $\beta T_{\mathrm b}$ and the dashed curves are $\beta T_{\mathrm c}$.  The left panel shows the Killing temperatures; the right panel shows the Bousso--Hawking normalized temperatures.}
    \label{fig:temperature-curves}
\end{figure}

Figure~\ref{fig:temperature-ratio-map} displays the temperature imbalance over the two-dimensional parameter space.  Since the Bousso--Hawking normalization rescales both horizon temperatures by the same factor, the ratio $T_{\mathrm b}/T_{\mathrm c}$ is the same for the Killing and Bousso--Hawking normalizations.  The plot confirms that the black-hole horizon is hotter throughout the non-degenerate branch and that the ratio tends to one only as the Nariai boundary is approached.

\begin{figure}[htbp]
    \centering
    \includegraphics[width=0.72\textwidth]{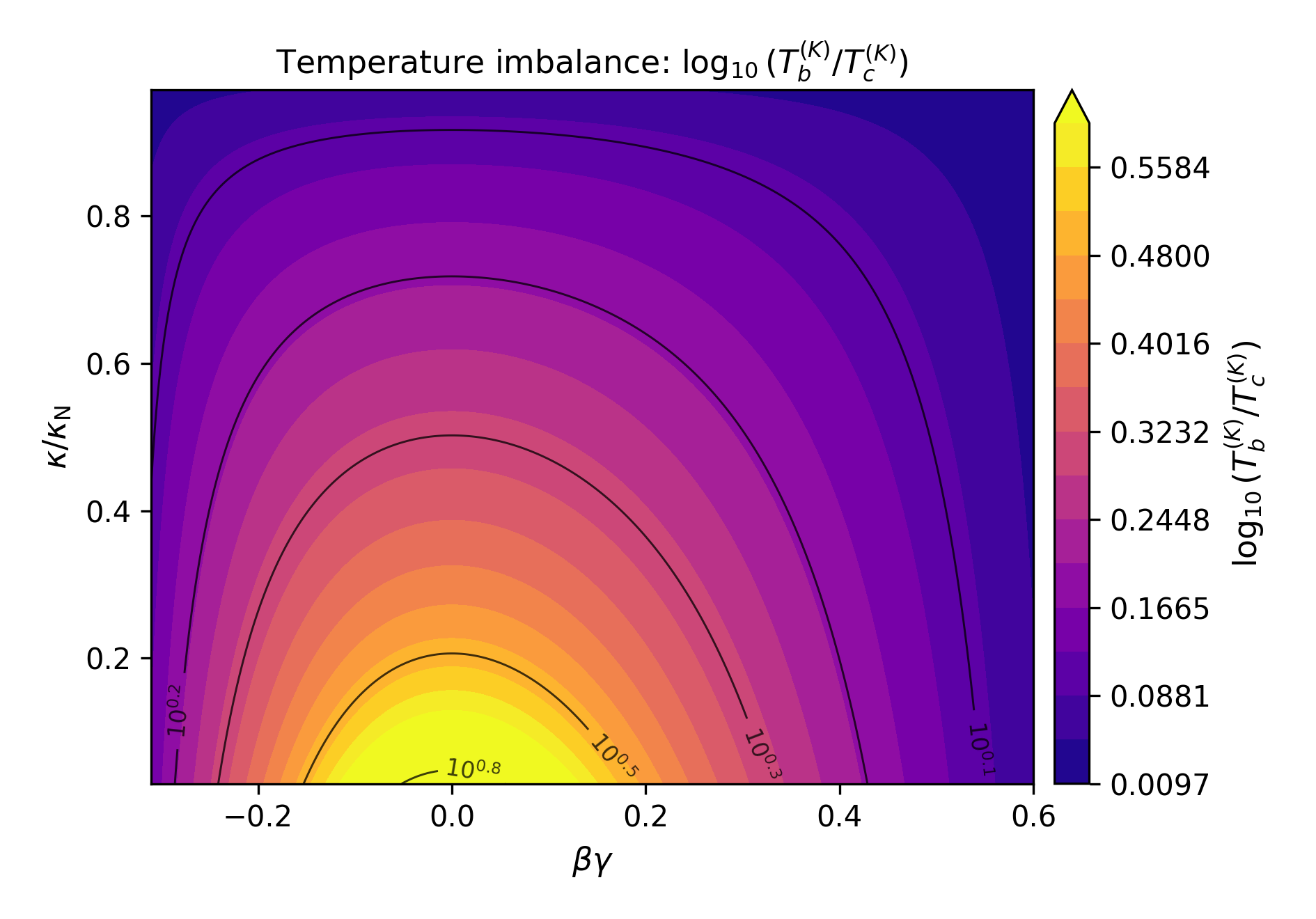}
    \caption{Parameter-space map of $\log_{10}(T_{\mathrm b}^{(K)}/T_{\mathrm c}^{(K)})$ in the black-hole window, using $x=\beta\gamma$ and $q/q_{\mathrm N}=\kappa/\kappa_{\mathrm N}$.  The ratio is independent of the common Bousso--Hawking redshift normalization.}
    \label{fig:temperature-ratio-map}
\end{figure}

\subsection{Effective two-horizon temperatures}

One sometimes introduces an effective temperature for the entire de Sitter static patch by combining the event horizon and the cosmological horizon into a single thermodynamic system.  Such constructions are useful, but they are not unique.  Their form depends on which quantity is called the internal energy, whether the cosmological scale is varied as pressure, whether the thermodynamic volume is the inter-horizon volume, and whether the entropy is taken to be a sum of the two horizon entropies, possibly with correlation terms.

For the Mannheim--Kazanas spacetime the same caveat is stronger.  In conformal gravity the linear $r$ term in the lapse is not a small perturbation of Einstein gravity; it is an independent integration constant, and a complete first law generally contains additional work terms.  Therefore an effective temperature should be regarded as an ensemble-dependent thermodynamic variable, not as the Hawking temperature of either horizon.  The invariant horizon data are the two surface gravities \eqref{eq:T-b-A}--\eqref{eq:T-c-A} and their normalized or local versions.

\section{Wald entropy and horizon thermodynamics}

Because the action is quadratic in curvature, the entropy is not proportional to area.  For the action \eqref{eq:weyl-action}, the Wald entropy of a Killing horizon $\mathcal H$ is
\begin{equation}
    S_{\mathcal H}
    =-2\pi\int_{\mathcal H}\dd^{2}x\,\sqrt{h}\,
      \frac{\partial \mathcal L}{\partial R_{\mu\nu\rho\sigma}}
      \epsilon_{\mu\nu}\epsilon_{\rho\sigma},
    \label{eq:wald-definition}
\end{equation}
where $\mathcal L=(\TW/16\pi)C_{\alpha\beta\lambda\delta}C^{\alpha\beta\lambda\delta}$ and $\epsilon_{\mu\nu}$ is the binormal to the horizon cross-section.  For the metric \eqref{eq:metric-ansatz},
\begin{equation}
    C_{\hat t\hat r\hat t\hat r}
    =\frac{r^{2}B''-2rB'+2B-2}{6r^{2}},
    \label{eq:weyl-orthonormal}
\end{equation}
where hatted indices denote an orthonormal frame.  
For the Lagrangian density in \eqref{eq:wald-definition}, the trace-free property of the Weyl tensor gives $\partial\mathcal L/\partial R_{\mu\nu\rho\sigma}=(\TW/8\pi)C^{\mu\nu\rho\sigma}$ inside the binormal contraction.  With the orientation convention used here this reduces, for a spherical horizon of radius $\rh$, to $S_{\mathrm h}=-4\pi\TW\rh^{2}C_{\hat t\hat r\hat t\hat r}|_{r=\rh}$.  Since \eqref{eq:MK-lapse} gives $C_{\hat t\hat r\hat t\hat r}=-(\mu+\beta\gamma r)/r^{3}$, the entropy below follows directly.
Substituting \eqref{eq:MK-lapse} and evaluating on a horizon gives
\begin{equation}
    S_{\mathrm h}
    =4\pi\TW\left(\frac{\mu}{\rh}+\beta\gamma\right).
    \label{eq:wald-entropy-mu}
\end{equation}
Using $B(\rh)=0$, the same entropy can be written as
\begin{equation}
    S_{\mathrm h}
    =4\pi\TW\left(1-2\beta\gamma+\gamma\rh-\kappa\rh^{2}\right).
    \label{eq:wald-entropy-rh}
\end{equation}
This formula applies to either $\rh=\rb$ or $\rh=\rc$, with the sign fixed by the action convention in \eqref{eq:weyl-action}.  If a topological Gauss--Bonnet term is added to the action, the entropy is shifted by a constant; the horizon dependence in \eqref{eq:wald-entropy-rh} is unchanged.

Several checks are immediate.  In the Schwarzschild--de Sitter limit $\gamma=0$,
\begin{equation}
    S_{\mathrm h}=4\pi\TW(1-\kappa\rh^{2}).
    \label{eq:entropy-sds}
\end{equation}
For the black-hole horizon this decreases from $4\pi\TW$ at small $\kappa\beta^{2}$ to $8\pi\TW/3$ at the Nariai point.  For pure de Sitter, $\beta=\gamma=0$ and $\rh=1/\sqrt{\kappa}$, the Weyl tensor vanishes and the entropy from a pure Weyl-squared action is zero.  At the Mannheim--Kazanas Nariai point, $\rh=3\beta$, and \eqref{eq:wald-entropy-mu} gives
\begin{equation}
    S_{\mathrm N}=\frac{8\pi\TW}{3},
    \label{eq:nariai-entropy}
\end{equation}
independent of $\gamma$.

Equations \eqref{eq:T-b-A}, \eqref{eq:T-c-A}, and \eqref{eq:wald-entropy-rh} are the basic thermodynamic data.  They are local to each horizon.  A black-hole horizon first law can be written schematically as
\begin{equation}
    \dd E_{\mathrm b}
    =T_{\mathrm b}\,\dd S_{\mathrm b}
     +\Psi_{\gamma}^{(\mathrm b)}\,\dd\gamma
     +\Theta_{\kappa}^{(\mathrm b)}\,\dd\kappa+\cdots,
    \label{eq:first-law-b-schematic}
\end{equation}
while a cosmological-horizon first law has the opposite orientation for the horizon-energy term,
\begin{equation}
    \dd E_{\mathrm c}
    =-T_{\mathrm c}\,\dd S_{\mathrm c}
     +\Psi_{\gamma}^{(\mathrm c)}\,\dd\gamma
     +\Theta_{\kappa}^{(\mathrm c)}\,\dd\kappa+\cdots.
    \label{eq:first-law-c-schematic}
\end{equation}
Here $E_{\mathrm b}$ and $E_{\mathrm c}$ denote the horizon energies appropriate to the chosen ensemble; $\Psi_{\gamma}^{(\mathrm b,c)}$ and $\Theta_{\kappa}^{(\mathrm b,c)}$ are the quantities conjugate to variations of the linear and quadratic couplings; and $T_{\mathrm b,c}$ denotes whichever temperature normalization has been fixed before writing the first law.  The sign difference between \eqref{eq:first-law-b-schematic} and \eqref{eq:first-law-c-schematic} is the usual orientation difference between black-hole and cosmological horizons.
The ellipses emphasize that a fully specified first law requires a choice of boundary conditions and thermodynamic ensemble.  In anti-de Sitter conformal gravity, analogous first laws include an additional pair conjugate to the massive spin-2 hair; in the present de Sitter setting the linear Mannheim--Kazanas parameter plays a similar role as an independent integration constant.  Holding $\gamma$ and $\kappa$ fixed and varying only the horizon position gives a useful restricted horizon thermodynamics, but it should not be confused with a universal global mass formula.

For reference, if $\gamma$ and $\kappa$ are held fixed, the black-hole horizon equation may be solved for $\beta$ as a function of the black-hole radius:
\begin{equation}
    3\gamma\beta^{2}-(2+3\gamma\rb)\beta
    +\rb(1+\gamma\rb-\kappa\rb^{2})=0.
    \label{eq:beta-quadratic}
\end{equation}
The branch continuous at $\gamma=0$ is
\begin{equation}
    \beta(\rb)=
    \frac{2+3\gamma\rb-
    \sqrt{4-3\gamma^{2}\rb^{2}+12\gamma\kappa\rb^{3}}}{6\gamma},
    \qquad (\gamma\ne0),
    \label{eq:beta-rb-branch}
\end{equation}
with the smooth limit
\begin{equation}
    \beta(\rb)\longrightarrow \frac{\rb}{2}(1-\kappa\rb^{2})
    \qquad\text{as}\qquad \gamma\to0.
    \label{eq:beta-gamma-zero-limit}
\end{equation}
The square-root branch in \eqref{eq:beta-rb-branch} is the one with a finite Schwarzschild--de Sitter limit; the other algebraic branch is not used in the restricted black-hole response function below.  A fixed-$(\gamma,\kappa)$ heat capacity for the black-hole horizon can then be defined by
\begin{equation}
    C_{\mathrm b}^{(\gamma,\kappa)}
    =T_{\mathrm b}\left(\frac{\partial S_{\mathrm b}}{\partial T_{\mathrm b}}\right)_{\gamma,\kappa}
    =T_{\mathrm b}\,
      \frac{\dd S_{\mathrm b}(\rb)/\dd\rb}
           {\dd T_{\mathrm b}(\rb)/\dd\rb},
    \label{eq:heat-capacity}
\end{equation}
where $S_{\mathrm b}(\rb)$ and $T_{\mathrm b}(\rb)$ are obtained by setting $\rh=\rb$ in \eqref{eq:wald-entropy-rh} and \eqref{eq:Bprime-horizon}, and then inserting \eqref{eq:beta-rb-branch}.  The superscript reminds us that this is not a grand-canonical response function: only the black-hole horizon radius, equivalently $\beta$ through \eqref{eq:beta-rb-branch}, is varied.  If a different ensemble is chosen, extra terms involving $\dd\gamma$ or $\dd\kappa$ must be included before a heat capacity is assigned.  The sign of this heat capacity is an ensemble-dependent local-stability diagnostic, not a statement of global equilibrium for the whole static patch.

\begin{figure}[htbp]
    \centering
    \includegraphics[width=0.92\textwidth]{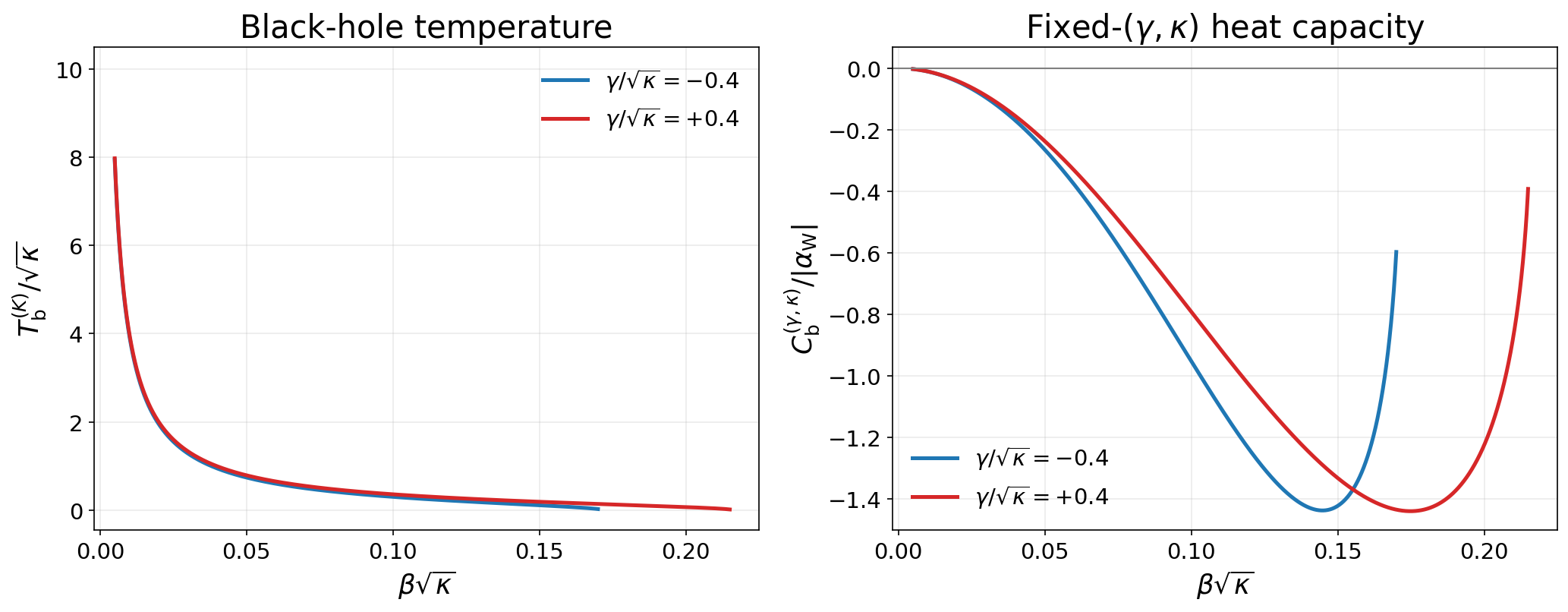}
    \caption{Representative fixed-$(\gamma,\kappa)$ temperature and response curves for the black-hole horizon.  The horizontal axis in both panels is the dimensionless branch parameter $\beta\sqrt{\kappa}$.  The left panel shows the Killing temperature $T_{\mathrm b}^{(K)}/\sqrt{\kappa}$ for the two cuts $\gamma/\sqrt{\kappa}=\pm0.4$.  The right panel overlays the corresponding fixed-$(\gamma,\kappa)$ heat capacities in different colors, using the conventional black-hole sign normalization $C_{\mathrm b}^{(\gamma,\kappa)}/|\alpha_{\mathrm W}|$ associated with the entropy-positive choice $-\alpha_{\mathrm W}>0$.}
    \label{fig:heat-capacity-phase-transition}
\end{figure}

Figure~\ref{fig:heat-capacity-phase-transition} illustrates both the Killing temperature and the restricted response function defined in \eqref{eq:heat-capacity}.  With the entropy-positive sign convention $-\alpha_{\mathrm W}>0$, the plotted black-hole heat capacity is negative over the representative branches, as in the familiar Schwarzschild-type local thermodynamic response.  In these cuts the Killing temperature decreases monotonically along the branch and no pole appears in $C_{\mathrm b}^{(\gamma,\kappa)}$.  The heat capacity instead develops a finite negative minimum before rising again toward the Nariai endpoint.

This response-function language should be read with the same qualifications as the heat capacity itself.  The finite minima visible in Figure~\ref{fig:heat-capacity-phase-transition} are local features of the fixed-$(\gamma,\kappa)$ equation of state, not evidence for a global equilibrium transition of the full de Sitter static patch.  The black-hole and cosmological horizons are generally at different temperatures, and no asymptotic reservoir fixes a unique energy.  Changing the time normalization rescales the temperature, and changing the ensemble by allowing $\gamma$ or $\kappa$ to vary changes the response matrix; either change can move or reinterpret these local extrema.  Thus the robust statement is the observed finite, negative local fixed-$(\gamma,\kappa)$ horizon response, not a universal phase boundary of Mannheim--Kazanas de Sitter black holes.

\section{Summary}

The Mannheim--Kazanas metric with non-zero $\gamma$ and non-zero $\kappa$ has a clean de Sitter black-hole interpretation in the branch
\begin{equation}
    \beta>0,
    \qquad
    -\frac{1}{3}<\beta\gamma<\frac{2}{3},
    \qquad
    0<\kappa<\frac{1+3\beta\gamma}{27\beta^{2}}.
    \label{eq:summary-window}
\end{equation}
In that window the geometry has a singularity at $r=0$, an event horizon $\rb$, a cosmological horizon $\rc$, and a negative third root of the horizon cubic.  The positive horizons obey $0<\rb<3\beta<\rc$, and the upper endpoint of the parameter interval is the Nariai limit $\rb=\rc=3\beta$.

The horizon temperatures are not unique until one specifies the normalization of the Killing vector.  With the coordinate Killing vector $\partial_{t}$,
\begin{equation}
    T_{\mathrm b}^{(K)}
       =\frac{1}{4\pi}\left(\frac{1-3\beta\gamma}{\rb}+2\gamma-3\kappa\rb\right),
    \qquad
    T_{\mathrm c}^{(K)}
       =\frac{1}{4\pi}\left(3\kappa\rc-2\gamma-\frac{1-3\beta\gamma}{\rc}\right).
    \label{eq:summary-temperatures}
\end{equation}
They are unequal for every non-degenerate black hole in this branch.  Bousso--Hawking normalization rescales both by $1/\sqrt{B(\rzero)}$, where $B'(\rzero)=0$, and gives the finite Nariai value $1/(6\pi\beta)$.  Tolman temperatures add the usual redshift factor for an observer at a specified radius.  Effective two-horizon temperatures are possible, but they are ensemble definitions rather than horizon Hawking temperatures.

For the Weyl-squared action \eqref{eq:weyl-action}, the entropy of either horizon is
\begin{equation}
    S_{\mathrm h}=4\pi\TW\left(\frac{\beta(2-3\beta\gamma)}{\rh}+\beta\gamma\right)
    =4\pi\TW\left(1-2\beta\gamma+\gamma\rh-\kappa\rh^{2}\right).
    \label{eq:summary-entropy}
\end{equation}
This replaces the area law.  Together with the appropriate choice of horizon temperature, it provides a consistent local thermodynamic description.  A global first law requires additional input: boundary conditions, a definition of energy in the de Sitter static patch, and a decision about whether $\gamma$ and $\kappa$ are fixed couplings or thermodynamic variables with conjugate quantities.

\section*{Acknowledgments}
B. C. L. is grateful to the Excellence project FoS UHK 2205/2025-2026 for the financial support.


\begin{thebibliography}{99}

\bibitem{Riegert1984}
R.~J. Riegert,
``Birkhoff's theorem in conformal gravity,''
\emph{Phys. Rev. Lett.} \textbf{53}, 315--318 (1984).

\bibitem{MannheimKazanas1989}
P.~D. Mannheim and D.~Kazanas,
``Exact vacuum solution to conformal Weyl gravity and galactic rotation curves,''
\emph{Astrophys. J.} \textbf{342}, 635--638 (1989).

\bibitem{Mannheim2006}
P.~D. Mannheim,
``Alternatives to dark matter and dark energy,''
\emph{Prog. Part. Nucl. Phys.} \textbf{56}, 340--445 (2006).

\bibitem{Mannheim2012}
P.~D. Mannheim,
``Making the case for conformal gravity,''
\emph{Found. Phys.} \textbf{42}, 388--420 (2012).

\bibitem{Maldacena2011}
J.~Maldacena,
``Einstein gravity from conformal gravity,''
\emph{arXiv:1105.5632} (2011).

\bibitem{GibbonsHawking1977}
G.~W. Gibbons and S.~W. Hawking,
``Cosmological event horizons, thermodynamics, and particle creation,''
\emph{Phys. Rev. D} \textbf{15}, 2738--2751 (1977).

\bibitem{BoussoHawking1996}
R.~Bousso and S.~W. Hawking,
``Pair creation of black holes during inflation,''
\emph{Phys. Rev. D} \textbf{54}, 6312--6322 (1996).

\bibitem{KubiznakSimovic2016}
D.~Kubiznak and F.~Simovic,
``Thermodynamics of horizons: de Sitter black holes and reentrant phase transitions,''
\emph{Class. Quantum Grav.} \textbf{33}, 245001 (2016).

\bibitem{Jacobson1995}
T.~Jacobson,
``Thermodynamics of spacetime: The Einstein equation of state,''
\emph{Phys. Rev. Lett.} \textbf{75}, 1260--1263 (1995).

\bibitem{Padmanabhan2010}
T.~Padmanabhan,
``Thermodynamical aspects of gravity: New insights,''
\emph{Rep. Prog. Phys.} \textbf{73}, 046901 (2010).

\bibitem{Verlinde2011}
E.~P.~Verlinde,
``On the origin of gravity and the laws of Newton,''
\emph{JHEP} \textbf{04}, 029 (2011).

\bibitem{Konoplya2010Entropic}
R.~A.~Konoplya,
``Entropic force, holography and thermodynamics for static space-times,''
\emph{Eur. Phys. J. C} \textbf{69}, 555--562 (2010).

\bibitem{Wald1993}
R.~M. Wald,
``Black hole entropy is the Noether charge,''
\emph{Phys. Rev. D} \textbf{48}, R3427--R3431 (1993).

\bibitem{IyerWald1994}
V.~Iyer and R.~M. Wald,
``Some properties of Noether charge and a proposal for dynamical black hole entropy,''
\emph{Phys. Rev. D} \textbf{50}, 846--864 (1994).

\bibitem{LuPangPope2012}
H.~L\"u, Y.~Pang, C.~N. Pope and J.~F. V\'azquez-Poritz,
``AdS and Lifshitz black holes in conformal and Einstein--Weyl gravities,''
\emph{Phys. Rev. D} \textbf{86}, 044011 (2012).

\bibitem{LiuLu2013}
H.~Liu and H.~L\"u,
``Charged rotating AdS black holes and their thermodynamics in conformal gravity,''
\emph{JHEP} \textbf{02}, 139 (2013).

\bibitem{XuSunZhao2017}
H.~Xu, Y.~Sun and L.~Zhao,
``Black hole thermodynamics and heat engines in conformal gravity,''
\emph{Int. J. Mod. Phys. D} \textbf{26}, 1750151 (2017).


\bibitem{SultanaKazanas2010}
J.~Sultana and D.~Kazanas,
``Bending of light in conformal Weyl gravity,''
\emph{Phys. Rev. D} \textbf{81}, 127502 (2010).

\bibitem{MomenniaHendi2019}
M.~Momennia and S.~H.~Hendi,
``Near-extremal black holes in Weyl gravity: Quasinormal modes and geodesic instability,''
\emph{Phys. Rev. D} \textbf{99}, 124025 (2019).

\bibitem{MomenniaHendi2020}
M.~Momennia and S.~H.~Hendi,
``Quasinormal modes of black holes in Weyl gravity: Electromagnetic and gravitational perturbations,''
\emph{Eur. Phys. J. C} \textbf{80}, 505 (2020).

\bibitem{MomenniaHendiBidgoli2021}
M.~Momennia, S.~H.~Hendi and F.~Soltani Bidgoli,
``Stability and quasinormal modes of black holes in conformal Weyl gravity,''
\emph{Phys. Lett. B} \textbf{813}, 136028 (2021).

\bibitem{Konoplya:2020fwg}
R.~A.~Konoplya,
``Conformal Weyl gravity via two stages of quasinormal ringing and late-time behavior,''
\emph{Phys. Rev. D} \textbf{103}, 044033 (2021).

\bibitem{Malik2024}
Z.~Malik,
``Quasinormal modes of the Mannheim--Kazanas black holes,''
\emph{Z. Naturforsch. A} \textbf{79}, 1063--1073 (2024).

\bibitem{Konoplya:2025mvj}
R.~A.~Konoplya, A.~Khrabustovskyi, J.~K\v{r}\'{\i}\v{z} and A.~Zhidenko,
``Quasinormal ringing and shadows of black holes and wormholes in dark matter-inspired Weyl gravity,''
\emph{JCAP} \textbf{04}, 062 (2025).

\bibitem{Lutfuoglu:2025hjy}
B.~C.~L\"utf\"uo\u{g}lu,
``Long-lived quasinormal modes and gray-body factors of black holes and wormholes in dark matter inspired Weyl gravity,''
\emph{Eur. Phys. J. C} \textbf{85}, 486 (2025).

\bibitem{Kouniatalis:2025pxs}
G.~Kouniatalis, P.~A.~Gonz\'alez, E.~Papantonopoulos and Y.~V\'asquez,
``Near-horizon geodesic instabilities and anomalous decay of quasinormal modes in Weyl black holes,''
\emph{Phys. Rev. D} \textbf{112}, 124061 (2025).

\end{thebibliography}
\end{document}